\newtheorem{definition}{Definition}
\newtheorem{lemma}{Lemma}
\newtheorem{theorem}{Theorem}
\newtheorem{assumption}{Assumption}
\newtheorem{remark}{Remark}
\newtheorem{corollary}{Corollary}
\title{\LARGE \bf
Learning Optimal Robust Control of Connected Vehicles \\ in Mixed Traffic Flow
}
\author{Jie Li, 
        Jiawei Wang,
        Shengbo Eben Li,
        Keqiang Li
\thanks{This study is supported by National Key R\&D Program of China with 2022YFB2502901. It is also partially supported by the National Natural Science Foundation of China under grant number 52221005. All correspondence should be sent to S. Li.}
\thanks{J.~Li, J.~Wang, S.~Li and K.~Li are with the School of Vehicle and Mobility, Tsinghua University, Beijing, China. (\{jie-li18,wang-jw18\}@mails.tsinghua.edu.cn, \{lishbo,likq\}@tsinghua.edu.cn).}%
}
\begin{document}

\maketitle
\thispagestyle{empty}
\pagestyle{empty}

\begin{abstract}

Connected and automated vehicles (CAVs) technologies promise to attenuate undesired traffic disturbances. However, in mixed traffic where human-driven vehicles (HDVs) also exist, the nonlinear human-driving behavior has brought critical challenges for effective CAV control. 
This paper employs the policy iteration method to learn the optimal robust controller for nonlinear mixed traffic systems. Precisely, we consider the $H_{\infty}$ control framework and formulate it as a zero-sum game, the equivalent condition for whose solution is converted into a Hamilton–Jacobi inequality with a Hamiltonian constraint. Then, a policy iteration algorithm is designed to generate stabilizing controllers with desired attenuation performance. Based on the updated robust controller, the attenuation level is further optimized in sum of squares program by leveraging the gap of the Hamiltonian constraint. Simulation studies verify that the obtained controller enables the CAVs to dampen traffic perturbations and smooth traffic flow.


\end{abstract}

\section{Introduction}

Undesired traffic disturbances may easily lead to the occurrence of traffic waves, where the involved vehicles periodically accelerate and decelerate, resulting in decreased travel efficiency, fuel economy and driving safety~\cite{sugiyama2008traffic}. The emergence of connected and automated vehicles (CAVs) 
promises efficient attenuation of traffic disturbances~\cite{li2020robust}. Recent research has either theoretically or empirically revealed that in mixed traffic, where human-driven vehicles (HDVs) also exist, CAVs could mitigate traffic waves and stabilize traffic flow even in a low penetration rate~\cite{stern2018dissipation,wang2020controllability,zheng2020smoothing}. 

Regarding the specific control methods of CAVs, existing model-based research mostly relies on a linearized dynamics model for the mixed traffic system. Common modeling frameworks include Lagrangian control~\cite{molnar2020open,cui2017stabilizing}, connected cruise control (CCC)~\cite{jin2017optimal}, and leading cruise control (LCC)~\cite{wang2022leading}. To obtain such a model, these research usually needs to linearize a car-following model of HDVs, \emph{e.g.}, optimal velocity model (OVM)~\cite{bando1995dynamical}, around certain traffic equilibrium state. In practice, however, the performance of these methods may easily be compromised due to the nonlinear human-driving behaviors and the time-varying traffic equilibrium states. To address these issues, some model-free learning policies have been recently proposed via reinforcement learning (RL)~\cite{wu2021flow,shi2021connected} or data-driven predictive control~\cite{wang2022deep,wang2023distributed}, but the dependence on large-scale traffic data has limited its practical deployment. 

To our best knowledge, very few studies have addressed the disturbance attenuation problem in nonlinear mixed traffic systems, with a very recent exception in~\cite{wang2023general}, where Lyapunov methods are employed for stability analysis. Besides, existing methods have not considered optimizing the disturbance attenuation performance of CAVs in mixed traffic flow. To optimize and control the nonlinear traffic system via CAVs, approximate dynamic programming (ADP) provides a promising technique through solving the nonlinear $H_{\infty}$ optimal control problem of the mixed traffic system. Particularly, this work utilizes the tool of policy iteration (PI) from ADP to learn a robust optimal controller for mixed traffic systems with explicit consideration of nonlinear human-driving behaviors. 

PI is a class of effective numerical method for nonlinear robust control~\cite{abu2006policy, zhu2017policy}, and has been recently applied to a wide range of diverse fields; see, \emph{e.g.}, 
robot manipulator~\cite{kong2020robust} and vehicle platooning~\cite{zhu2018adaptive}. Compared with RL, PI enjoys complete theoretical foundations, including algorithm convergence and closed-loop stability~\cite{li2022reinforcement}, which play a critical role in connected vehicle control. 
Indeed, benefiting from the advantages of this method, an adaptive optimal controller has been recently designed in~\cite{huang2020learning} with respect to unknown and heterogeneous HDV behaviors in mixed traffic. However, how to achieve an optimal disturbance attenuation performance remains an open question. 
To address this issue, this paper develops a model-based learning algorithm to optimize the disturbance attenuation level and derive an $H_{\infty}$ optimal controller for the CAVs from the nonlinear mixed traffic dynamics. 
Precisely, the main contributions of this paper are as follows: 
\begin{itemize}
    \item An affine nonlinear model is established for the mixed traffic system based on the LCC framework. Compared with existing work where linearized dynamics around equilibrium states are under consideration~\cite{cui2017stabilizing,jin2017optimal,wang2022leading}, we directly focus on the nonlinear dynamics to design model-based learning control policies for the CAVs.
    \item The $H_{\infty}$ control problem of the nonlinear mixed traffic system is formulated as a zero-sum game, whose control policy can be obtained by solving the converted Hamilton-Jacobi (HJ) inequality. The obtained state-feedback controller is proved to achieve the given attenuation level for the mixed traffic system. 
    \item The HJ inequality reserves the optimization space for attenuation level. Accordingly, we further develop a model-based learning algorithm, which optimizes the attenuation performance in outer-loop iterations through sum of squares programs, and generates stabilizing controllers with attenuation performance guarantees at every inner-loop iteration in a PI paradigm. 
\end{itemize}

The rest of this paper is organized as follows. Section~\ref{sec.problem_statement} establishes the nonlinear mixed traffic model. The model-based learning algorithm and its theoretical analysis are presented in Section~\ref{sec.method}. Section~\ref{sec.simulation} shows the simulation results, and Section~\ref{sec.conclusion} concludes this work.


\section{Nonlinear Modeling of Mixed Traffic}
\label{sec.problem_statement}



Consider the mixed traffic system shown in Fig.~\ref{fig:mixed_platoon}, where there is one head vehicle (indexed as $0$), $m$ CAVs and $n-m$ HDVs following behind (indexed from $1$ to $n$). Without loss of generality, we assume that the first vehicle behind the head vehicle is CAV. Denote $S=\{l_1,l_2,\ldots,l_m\}$ as the set of all the CAV indexes, where $1=l_1<l_2<\cdots<l_m\leq n$. Such a multi-vehicle system in mixed traffic is named as a special form of LCC~\cite{wang2022leading}, which incorporates both upstream and downstream traffic information, and allows the CAV to attenuate the disturbances from the head vehicle, whilst actively leading the motion of the HDVs behind. 

Denote $p_i(t)$ and $v_i(t)$ as the position and velocity of vehicle $i$, respectively. Then, $s_i(t) = p_{i-1}(t) - p_i(t)$ and ${\dot{s}}_i(t) = v_{i-1}(t) - v_i(t)$ represent the spacing (relative distance) and relative velocity of vehicle $i$ with respect to its predecessor. Motivated by recent research on mixed traffic~\cite{jin2017optimal,wang2020controllability,molnar2020open}, we consider the OVM model for the HDVs, a typical nonlinear car-following model, to represent its longitudinal driving behavior, which is given by~\cite{bando1995dynamical}
\begin{equation} 
    \label{eq.ovm_model}
    \dot{v_i}(t) = \alpha_i \left(v^\mathrm{d}(s_i(t)) - v_i(t)\right) +\beta_i {\dot{s}}_i(t),\; i \notin S,
\end{equation}
where $\alpha_i$, $\beta_i$ denote the sensitivity coefficients for vehicle $i$, and the spacing-dependent desired velocity $v^\mathrm{d}(s_{i}(t))$ is
\begin{equation} 
    \label{eq.piecewise}
    v^\mathrm{d}( s_{i}(t) ) =
        \begin{cases}
            0, &s_{i}(t) \leq s_\mathrm{st}, \\
            \bar{v}^\mathrm{d}\left( s_{i}(t) \right), &s_\mathrm{st} < s_{i}(t) < s_\mathrm{go}, \\
            v_{\rm max}, &s_{i}(t) \geq s_\mathrm{go}, 
        \end{cases}
\end{equation}
with $\bar{v}^\mathrm{d}$ given by
\begin{equation} 
    \nonumber
    \bar{v}^\mathrm{d}\left( s_{i}(t) \right) = \frac{v_{\rm max}}{2}\left( {1 - {\cos\left( {\frac{s_{i}(t) - s_\mathrm{st}}{s_\mathrm{go} - s_\mathrm{st}}\pi} \right)}} \right).
\end{equation}

Define the deviation state of each vehicle from the equilibrium state as
$
        {\tilde{s}}_{i}(t) = s_{i}(t) - s^{*},\,
        {\tilde{v}}_{i}(t) = v_{i}(t) - v^{*}
$, 
where $s^*$, $v^*$ denote the equilibrium spacing and velocity, respectively. For simplicity, a homogeneous setup for $s^*$ is under consideration, but all the results can be generalized to the heterogeneous case. 
Denote
$
    x_{i}(t) = \left[{\tilde{s}}_{i}(t), {\tilde{v}}_{i}(t)\right]^{\top} 
$
as the states of vehicle $i$. 
Then, the nonlinear dynamics model for each HDV around the equilibrium state is obtained as follows
\begin{equation} 
    \label{eq.dynamic_hdv}
    \begin{aligned} 
        {\dot{\tilde{s}}}_{i}(t) &= {\tilde{v}}_{i-1}(t) - {\tilde{v}}_{i}(t), \\
        {\dot{\tilde{v}}}_{i}(t) &= h_i\left({\tilde{s}}_{i}(t),\:{\tilde{v}}_{i}(t),\:{\tilde{v}}_{i-1}(t)\right), \\
    \end{aligned}
\end{equation}
where 
\begin{equation}
\begin{aligned}
        h_i(\cdot)=\;& \alpha_{i} \left( {v^\mathrm{d}( {{\tilde{s}}_{i}(t) + s^{*}} ) - \left( {{\tilde{v}}_{i}(t) + v^{*}} \right)} \right) \\
        & + \beta_{i}\left( {{\tilde{v}}_{i-1}(t) - {\tilde{v}}_{i}(t)} \right).
\end{aligned}
\end{equation}



For the CAV, its acceleration signal ${\dot{\tilde{v}}}_{l_i}(t)$ is regarded as the control input $u_i(t),\; i \in \mathbb{N}_1^m$,  where $\mathbb{N}_1^m$ denotes all the natural numbers within $\left[1,m\right]$. Then, the longitudinal control model of the CAVs is given by~\cite{wang2022leading}
\begin{equation} 
    \label{eq.dynamic_cav}
        {\dot{\tilde{v}}}_{l_i}(t) = u_i(t), \; i \in \mathbb{N}_1^m.
\end{equation}

\begin{figure}[!tb]
    \centering\includegraphics[width=0.49\textwidth]{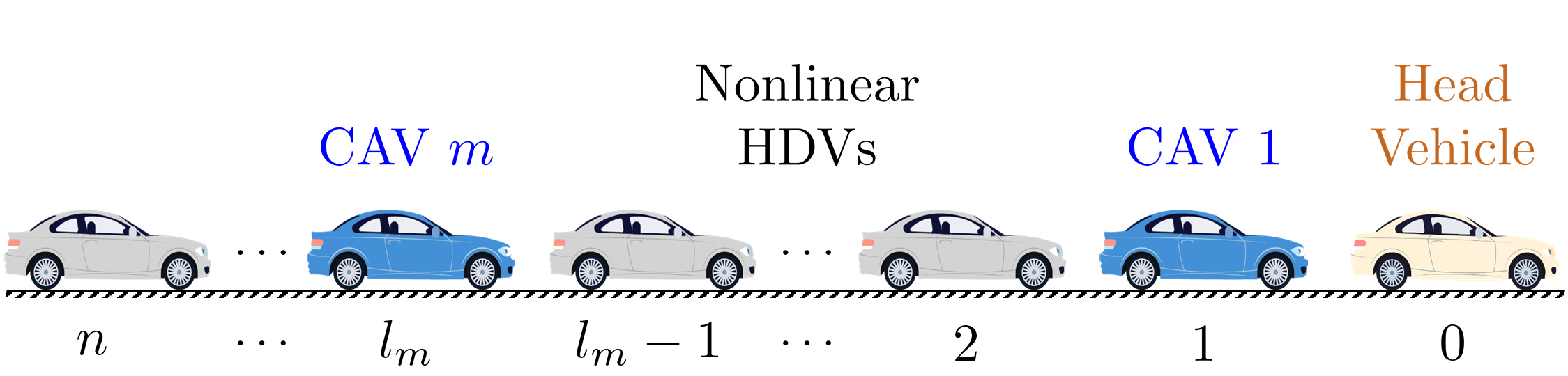}
    \vspace{-4mm}
    \caption{Schematic of the mixed traffic system in the LCC framework. There are multiple CAVs (colored in blue) and HDVs (colored in gray, with a nonlinear car-following model) following behind the head vehicle. The first vehicle behind the head vehicle is CAV.}
    \label{fig:mixed_platoon}
    \vspace{-4mm}
\end{figure}

Lumping the dynamics of the CAV and the HDVs, a nonlinear model for the LCC system can be established as follows
\begin{equation}
    \label{eq.mixed_platoon}
    \dot{x}(t) = f(x(t)) + g(x(t)) u(t) + k(x(t)) w(t), 
\end{equation}
where the lumped state and control input are defined as
\begin{equation}
    \nonumber
    \begin{aligned}
        x(t) = {\left[x_{1}^{\top}(t), x_{2}^{\top}(t), \cdots, x_{n}^{\top}(t) \right]}^{\top} \in \mathbb{R}^{2n}, \\
        u(t) = {\left[u_{1}^{\top}(t), u_{2}^{\top}(t), \cdots, u_{m}^{\top}(t) \right]}^{\top}\in \mathbb{R}^{m},
    \end{aligned}
\end{equation}
respectively, and the disturbance $w(t) = \tilde{v}_0(t) \in \mathbb{R}$ represents the velocity deviation of the head vehicle. The functions $f: \mathbb{R}^{2n} \rightarrow \mathbb{R}^{2n}$, $g: \mathbb{R}^{2n} \rightarrow  \mathbb{R}^{2n \times m}$ and $k: \mathbb{R}^{2n} \rightarrow \mathbb{R}^{2n}$ are known vector-valued functions. In the system model~\eqref{eq.mixed_platoon}, the dynamics of the rear $n - 1$ vehicles satisfy \eqref{eq.dynamic_hdv}.  
In the case of $n = 3$, $m = 1$, for example, the specific expression of the dynamic model~\eqref{eq.mixed_platoon} is as follows
\begin{equation}
    \nonumber
    f(x(t)) = \begin{bmatrix}
        {- {\tilde{v}}_{1}}(t) \\
        0 \\
        {{\tilde{v}}_{1}(t) - {\tilde{v}}_{2}}(t) \\
        h\left({\tilde{s}}_{2}(t),\:{\tilde{v}}_{2}(t),\:{\tilde{v}}_{1}(t)\right) \\
        {{\tilde{v}}_{2}(t) - {\tilde{v}}_{3}}(t) \\
        h\left({\tilde{s}}_{3}(t),\:{\tilde{v}}_{3}(t),\:{\tilde{v}}_{2}(t)\right) \\
    \end{bmatrix}, 
\end{equation}
\begin{equation}
    \nonumber
    g(x(t))\:=\:{\left[0,1,0,0,0,0\right]}^{\top}, ~ k(x(t))\:=\:{\left[1,0,0,0,0,0\right]}^{\top}.\:
\end{equation}

To describe the performance of the mixed traffic system, we define $z(t) \in \mathbb{R}^{2n + m}$ as the output 
\begin{equation}
    \label{eq.performance_output}
    z(t) \triangleq \begin{bmatrix}
    \sqrt{Q} x(t) \\
    \sqrt{R} u(t) \\
    \end{bmatrix} ,
\end{equation}
and the square of its norm is given by
\begin{equation}
    \left\| z(t) \right\|^{2} = z^{\top}(t)z(t) = x^{\top}(t)Qx(t) + u^{\top}(t)Ru(t),
\end{equation}
where $\sqrt{Q}=\mathrm{diag}(\theta_s,\theta_v,\ldots,\theta_s,\theta_v) \in \mathbb{R}^{2n \times 2n}$ and $\sqrt{R}=\mathrm{diag}(\theta_u,\ldots,\theta_u) \in \mathbb{R}^{m \times m}$ are positive definite matrices, with $\theta_s$, $\theta_v$, $\theta_u$ denoting the weight coefficients for penalizing spacing deviations, velocity deviations and control inputs. Note that the system~\eqref{eq.mixed_platoon}-\eqref{eq.performance_output} is zero-state observable. 

\begin{remark}
Existing research mostly considers the linearized dynamics of the mixed traffic system to design the control policies for the CAVs. In this paper, we directly focus on the affine model~\eqref{eq.mixed_platoon}, and aim at developing model-based learning method with theoretical guarantees and ability to process nonlinear mixed traffic systems. Note that although the OVM model~\eqref{eq.ovm_model} is utilized for describing the HDVs' dynamics and the resulting expression of $g(x)$ is a constant function, our control method is applicable to any mixed traffic system in the general nonlinear affine form of~\eqref{eq.mixed_platoon}. 
\end{remark}

\section{$H_{\infty}$ Optimal Control by Policy Iteration}
\label{sec.method}

This section first formulates the $H_{\infty}$ control problem of the mixed traffic system as a zero-sum game. Based on the PI framework, a model-based learning algorithm is then developed to solve the equivalent HJ inequality. 

\subsection{Problem Formulation}

To characterize the disturbance attenuation performance of the closed-loop system, we first give the following definition. For the convenience of writing, the time $t$ will be omitted in the subsequent content. 

\begin{definition}[Disturbance Attenuation]
{
For all disturbance $w \in L_{2} [0,\infty)$, the closed-loop system~\eqref{eq.mixed_platoon}-\eqref{eq.performance_output} with the initial state $x(0) = 0$ is said to have an $L_2$-gain $\leq$ $\gamma$, if 
\begin{equation}
    \nonumber
    \int_{0}^{\infty}{\left\| z \right\|^{2}dt} \leq {\gamma}^{2} \int_{0}^{\infty}{\left\| w \right\|^{2} dt} .
\end{equation}
In other words, the system satisfies the disturbance attenuation performance with attenuation level $\gamma > 0$. 
}
\end{definition}

The attenuation level $\gamma$ captures the influence of the external disturbance $w$ on the performance output $z$. Precisely, a smaller value of $\gamma$ indicates a better capability of CAVs in dissipating traffic waves. Then, given the nonlinear mixed traffic system~\eqref{eq.mixed_platoon}, define the value function of the initial state $x = x(0)$ as
\begin{equation}
    \label{eq.value_function}
    V\left(x \right) \triangleq {\int_{0}^{\infty}\left( {l\left( {x(\tau),u(\tau),w(\tau)} \right)} \right)}d\tau ,
\end{equation}
where the cost function is defined as
\begin{equation}
    \label{eq.cost_function}
    l\left( {x,u,w} \right) \triangleq x^{\top}Qx + u^{\top}Ru - \gamma^{2}w^{\top}w .
\end{equation}
From the point of view of game theory, disturbance aims at deteriorating control performance, while control policy optimizes the worst-case performance in $H_{\infty}$ control~\cite{bacsar2008h}. Given a suitable attenuation level $\gamma > 0$, the $H_{\infty}$ control problem can be formulated as the following zero-sum game~\cite{lewis2012optimal}
\begin{equation}
    \label{eq.zero-sum_game}
    V^{*}(x) = \min\limits_{u(\cdot)}{\max\limits_{w(\cdot)}{{\int_{0}^{\infty}\left( {l\left( {x(\tau),u(\tau),w(\tau)} \right)} \right)}d\tau}} ,
\end{equation}
where $V^* (x)$ is the Nash value, control $u(\cdot)$ and disturbance $w(\cdot)$ are two sides of the game. Moreover, the controller at the Nash equilibrium should stabilize the system at $w \equiv 0$, and allow the closed-loop system to have an $L_2$-gain $\leq$ $\gamma$ for all $w \in L_2[0,\:\infty)$. Further, the $H_{\infty}$ optimal control problem explores the lowest attenuation level $\gamma^* > 0$ and resolves the corresponding zero-sum game~\eqref{eq.zero-sum_game}. The existence of the lowest attenuation level of nonlinear affine systems is guaranteed by~\cite{bacsar2008h}. The following assumption declares the existence of the desired controller in mixed traffic flow. 

\begin{assumption}
\label{assump.existence_of_robust_controller}
Given an attenuation level $\gamma \geq \gamma^*$, there exits a robust controller $u = \pi(x)$ with $\pi(0) = 0$ such that the system~\eqref{eq.mixed_platoon}-\eqref{eq.performance_output} is stabilized at $w \equiv 0$ and that the closed-loop system has an $L_2$-gain $\leq$ $\gamma$ for all $w \in L_2[0,\:\infty)$. 
\end{assumption}

On the premise of zero-state observability, the solution to the following Hamilton--Jacobi--Isaacs (HJI) equation solves the zero-sum game~\cite{lewis2012optimal}
\begin{equation}
    \label{eq.HJI}
    \begin{aligned}
        x^{\top}Qx &+ (\nabla V^{*}(x))^{\top}f(x) \\
        &- \frac{1}{4}(\nabla V^{*}(x))^{\top} g(x)R^{- 1}g^{\top}(x) \nabla V^{*}(x) \\
        &+ \frac{1}{4\gamma^{2}}(\nabla V^{*}(x))^{\top} k(x)k^{\top}(x) \nabla V^{*}(x) = 0 ,
    \end{aligned}
\end{equation}
which is a nonlinear partial differential equation about the optimal value function $V^* (x)$ with the boundary condition $V^* (0) = 0$. If the HJI equation has a smooth positive semi-definite solution $V^*(x)$, then the controller is derived as 
\begin{equation}
    \nonumber
    u^{*}(x) = - \frac{1}{2}R^{- 1}g^{\top}(x)\nabla V^{*}(x) .
\end{equation}

Traditional PI algorithms usually focus on providing numerical methods for solving the HJI equation~\eqref{eq.HJI} to generate robust controllers. Through the following lemma, or as shown in Theorem~\ref{thm.stability_and_attenuation_perf}, we can also derive a robust controller from the associated HJ inequality~\eqref{eq.HJ_inequality}. Compared with solving HJI equation directly, the gap of HJ inequality allows for further optimizing the attenuation level.

\begin{lemma}[\!\!{\cite[Theorem 16 \& Corollary 17]{van1992}}]
\label{lemma.HJ_equation_and_inequality}
Consider the nonlinear system~\eqref{eq.mixed_platoon}-\eqref{eq.performance_output} with an attenuation level $\gamma$. Suppose that there is a smooth positive semi-definite solution $V(x)$ to the HJI equation~\eqref{eq.HJI} or the HJ inequality 
\begin{equation}
    \label{eq.HJ_inequality}
    \begin{aligned}
        x^{\top}Qx &+ (\nabla V(x))^{\top}f(x) \\
        &- \frac{1}{4}(\nabla V(x))^{\top} g(x)R^{- 1}g^{\top}(x) \nabla V(x) \\
        &+ \frac{1}{4\gamma^{2}}(\nabla V(x))^{\top} k(x)k^{\top}(x) \nabla V(x) \leq 0 ,
    \end{aligned}
\end{equation}
with the boundary condition $V(0) = 0$, then the closed-loop system with the state feedback controller 
\begin{equation}
    \nonumber
    u(x) = - \frac{1}{2}R^{- 1}g^{\top}(x)\nabla V(x) ,
\end{equation}
is asymptotically stable at $w \equiv 0$, and has an $L_2$-gain $\leq$ $\gamma$ for all disturbance $w \in L_{2} [0,\infty)$. 
\end{lemma}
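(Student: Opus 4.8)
The plan is to treat $V$ as a storage (Lyapunov) function and to establish the dissipation inequality $\dot V \le \gamma^2\|w\|^2 - \|z\|^2$ along the closed-loop trajectories; both asserted properties are then standard consequences. First I would substitute the feedback law $u(x) = -\tfrac12 R^{-1}g^\top(x)\nabla V(x)$ into~\eqref{eq.mixed_platoon} and differentiate $V$ along the resulting closed-loop flow, which is well defined because $V$, $f$, $g$, $k$ are smooth:
\begin{equation}
    \nonumber
    \dot V = (\nabla V(x))^\top f(x) + (\nabla V(x))^\top g(x) u(x) + (\nabla V(x))^\top k(x) w .
\end{equation}

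Next I would rewrite the right-hand side so that it matches~\eqref{eq.HJ_inequality}. Using the explicit form of the controller, the control-affine terms collapse to $(\nabla V)^\top g u + u^\top R u = -\tfrac14(\nabla V)^\top g R^{-1}g^\top\nabla V$, i.e.\ the cross term plus the control penalty $u^\top R u$ reproduce exactly the $gR^{-1}g^\top$ term of the HJ inequality. For the disturbance, I complete the square: with $w^\star \triangleq \tfrac{1}{2\gamma^2}k^\top(x)\nabla V(x)$,
\begin{equation}
    \nonumber
    (\nabla V)^\top k w - \gamma^2 w^\top w = \tfrac{1}{4\gamma^2}(\nabla V)^\top k k^\top \nabla V - \gamma^2\|w - w^\star\|^2 .
\end{equation}
Adding $x^\top Q x$ to both sides and regrouping yields
\begin{equation}
    \nonumber
    \dot V + \|z\|^2 - \gamma^2\|w\|^2 = \mathcal H(x) - \gamma^2\|w - w^\star\|^2 ,
\end{equation}
where $\mathcal H(x)$ is precisely the left-hand side of~\eqref{eq.HJ_inequality}, hence $\le 0$ by hypothesis (and $\equiv 0$ when $V$ solves the HJI equation~\eqref{eq.HJI}). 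Therefore $\dot V \le \gamma^2\|w\|^2 - \|z\|^2$ for all admissible $w$.

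From here both conclusions follow quickly. For the $L_2$-gain, integrate the dissipation inequality from $0$ to $T$ with $x(0) = 0$: since $V\ge 0$ and $V(0) = 0$, we get $0 \le V(x(T)) \le \int_0^T\big(\gamma^2\|w\|^2 - \|z\|^2\big)\,dt$, so $\int_0^T\|z\|^2\,dt \le \gamma^2\int_0^T\|w\|^2\,dt$ for every $T$, and letting $T\to\infty$ gives an $L_2$-gain $\le\gamma$ for all $w\in L_2[0,\infty)$. For asymptotic stability, set $w\equiv 0$, so $\dot V \le -\|z\|^2 \le -x^\top Q x$, which is negative definite because $Q\succ0$. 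It remains to upgrade $V$ from positive semi-definite to positive definite so that Lyapunov's direct method applies: if $V(x_0) = 0$, then along the $w\equiv0$ trajectory from $x_0$ one has $0 \le V(x(t)) \le V(x_0) - \int_0^t\|z\|^2\,d\tau = -\int_0^t\|z\|^2\,d\tau \le 0$, forcing $z\equiv 0$ on $[0,\infty)$ and hence, by zero-state observability (here immediate since $\sqrt Q$ is nonsingular), $x\equiv0$, so $x_0 = 0$. Thus $V$ is a valid Lyapunov function with negative-definite derivative, and the origin is asymptotically stable at $w\equiv 0$.

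The completion-of-squares manipulation is routine bookkeeping; the step that genuinely uses the standing hypotheses is the last one, where positive semi-definiteness of $V$ together with (zero-state) observability — the property recorded just before the lemma — is exactly what turns the dissipation inequality into asymptotic stability. In the more general setting of~\cite{van1992} one would at this point invoke LaSalle's invariance principle (needed if $Q$ were only positive semi-definite), and if one wanted the stability conclusion to be global rather than local one would additionally need $V$ to be radially unbounded.
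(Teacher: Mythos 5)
Your proof is correct. Note, however, that the paper does not prove this lemma at all: it is imported verbatim from van der Schaft (Theorem 16 and Corollary 17 of the cited reference), and the only in-paper argument of this kind is the proof of Theorem~\ref{thm.stability_and_attenuation_perf}, which substitutes the controller into the converted inequality~\eqref{eq.converted_inequality} (itself obtained by the same completion of squares in $w$ that you perform), invokes Lyapunov's direct method for $w\equiv 0$, and then defers the $L_2$-gain claim back to the cited Theorem 16 rather than integrating explicitly. So your write-up is a self-contained version of exactly the dissipation-inequality argument the paper and its reference rely on: you derive $\dot V \le \gamma^2\|w\|^2-\|z\|^2$ by absorbing $u^{\top}Ru$ into the $gR^{-1}g^{\top}$ term and completing the square about $w^\star$, integrate with $x(0)=0$ and $V\ge 0$ for the gain bound, and use $w\equiv 0$ for stability. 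The one place where you are genuinely more careful than the paper is the stability step: the paper simply calls the positive semi-definite $V$ a ``Lyapunov function candidate,'' whereas you correctly observe that semi-definiteness alone does not suffice and upgrade it using zero-state observability (here trivially available since $Q\succ 0$), which is precisely the detectability hypothesis under which van der Schaft's Theorem 16 concludes (local) asymptotic stability; your closing remarks about LaSalle and radial unboundedness for the semi-definite-$Q$ and global cases are also consistent with the cited source.
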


Besides, the existence of the solution to the HJ inequality is guaranteed by the following lemma. 
\begin{lemma}[{\cite[Theorem 18]{van1992}}]
\label{lemma.converse}
Consider the nonlinear system~\eqref{eq.mixed_platoon}-\eqref{eq.performance_output} and an attenuation level $\gamma$. If there is a controller $u = \pi(x)$ satisfying Assumption~\ref{assump.existence_of_robust_controller}, there exists a smooth positive semi-definite solution $V_a(x)$ to the HJ inequality~\eqref{eq.HJ_inequality}. 
\end{lemma}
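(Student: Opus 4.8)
The plan is to take $V_a$ to be the \emph{available storage} of the closed loop under the robust controller $\pi$, with respect to the supply rate $\gamma^{2}\|w\|^{2}-\|z\|^{2}$, and then to recover the HJ inequality~\eqref{eq.HJ_inequality} by two completions of the square. Write the closed loop as $\dot x = f(x)+g(x)\pi(x)+k(x)w$ with output magnitude $\|z\|^{2}=x^{\top}Qx+\pi(x)^{\top}R\pi(x)$, and define
\[
V_a(x_0)\ \triangleq\ \sup_{w(\cdot)\in L_2,\; T\ge 0}\ \int_{0}^{T}\!\big(\|z(\tau)\|^{2}-\gamma^{2}\|w(\tau)\|^{2}\big)\,d\tau,
\]
along the trajectory with $x(0)=x_0$. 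Choosing $T=0$ gives $V_a(x_0)\ge 0$. Choosing $x_0=0$ and extending any admissible $w$ on $[0,T]$ by zero beyond $T$, the $L_2$-gain bound of Assumption~\ref{assump.existence_of_robust_controller} yields $\int_{0}^{T}\|z\|^{2}\,d\tau\le\gamma^{2}\int_{0}^{T}\|w\|^{2}\,d\tau$, hence $V_a(0)\le 0$ and therefore $V_a(0)=0$; together with nonnegativity this is the asserted positive semi-definiteness. Finiteness of $V_a(x_0)$ at \emph{every} $x_0$ (not merely at states reachable from the origin) is where Assumption~\ref{assump.existence_of_robust_controller} is used in full: asymptotic stability at $w\equiv 0$ plus zero-state observability makes the cost with $w\equiv 0$ finite from any $x_0$, and the $L_2$-gain bound controls the extra energy a disturbance can extract --- this is the substance of~\cite[Theorem 18]{van1992}.

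Next I would derive the HJ inequality. By the standard concatenation (dynamic-programming) property of the available storage, for every $x_0$, every $t_1>0$ and every $w$ on $[0,t_1]$,
\[
V_a(x_0)\ \ge\ \int_{0}^{t_1}\!\big(\|z\|^{2}-\gamma^{2}\|w\|^{2}\big)\,d\tau\ +\ V_a\big(x(t_1)\big),
\]
i.e.\ $V_a(x(t_1))-V_a(x_0)\le\int_{0}^{t_1}\!\big(\gamma^{2}\|w\|^{2}-\|z\|^{2}\big)\,d\tau$. Assuming the regularity that renders $V_a$ continuously differentiable (see below), dividing by $t_1$ and letting $t_1\downarrow 0$ gives, for all $x$ and all $w$,
\[
\nabla V_a(x)^{\top}\!\big(f(x)+g(x)\pi(x)+k(x)w\big)\ \le\ \gamma^{2}w^{\top}w-x^{\top}Qx-\pi(x)^{\top}R\pi(x).
\]
The left-hand side is affine in $w$ and the right-hand side quadratic, so maximizing over $w$ (attained at $w^{*}=\tfrac{1}{2\gamma^{2}}k^{\top}\nabla V_a$) produces
\[
x^{\top}Qx+\nabla V_a^{\top}f+\nabla V_a^{\top}g\,\pi+\pi^{\top}R\,\pi+\tfrac{1}{4\gamma^{2}}\nabla V_a^{\top}kk^{\top}\nabla V_a\ \le\ 0.
\]
Finally, since $R\succ 0$, completing the square in $\pi$ gives $\pi^{\top}R\pi+\nabla V_a^{\top}g\,\pi\ \ge\ -\tfrac14\nabla V_a^{\top}gR^{-1}g^{\top}\nabla V_a$ (with equality at $\pi=-\tfrac12 R^{-1}g^{\top}\nabla V_a$, the controller of Lemma~\ref{lemma.HJ_equation_and_inequality}); substituting this lower bound into the previous display yields exactly~\eqref{eq.HJ_inequality} together with $V_a(0)=0$.

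I expect the real obstacle to be the \emph{smoothness} of $V_a$: in general the available storage is only lower semicontinuous, and the passage from the integral dissipation inequality to its infinitesimal gradient form is not automatic. This is precisely the point settled in~\cite[Theorem 18]{van1992}, where the standing hypotheses --- a smoothly stabilizing state feedback achieving the prescribed gain, together with smoothness of $f,g,k$ --- are shown to force the existence of a smooth positive semi-definite solution. For a self-contained treatment one could instead read~\eqref{eq.HJ_inequality} in the viscosity sense, verify that $V_a$ is a viscosity supersolution from the dissipation inequality, and then recover a classical smooth solution by a mollification/regularization argument; alternatively, one invokes~\cite[Theorem 18]{van1992} directly. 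Everything else --- nonnegativity, the boundary value, the concatenation argument, and the two completions of the square --- is routine once this regularity is available.
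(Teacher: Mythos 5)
The paper gives no proof of this lemma at all---it is quoted verbatim as \cite[Theorem 18]{van1992}---so there is no internal argument to compare against. Your reconstruction (available storage of the closed loop under $\pi$ for the supply rate $\gamma^{2}\|w\|^{2}-\|z\|^{2}$, nonnegativity and $V_a(0)=0$ from the $L_2$-gain bound, the infinitesimal dissipation inequality, maximization over $w$, and the completion of the square in $\pi$ using $R\succ 0$) is exactly the standard argument behind that cited theorem and is correct, and you rightly identify the only genuinely delicate points---finiteness of the available storage at every state and its smoothness---as the content of van der Schaft's result, which you defer to the citation just as the paper does.
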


Similar to the HJI equation~\eqref{eq.HJI}, the HJ inequality~\eqref{eq.HJ_inequality} contains two nonlinear terms about the differential of value function, which are related to control input function~$g(x)$ and disturbance input function~$k(x)$, respectively. This makes it non-trivial to get the problem solutions.

\subsection{Inequality Conversion}

We proceed to provide a concrete procedure to solve the HJ inequality~\eqref{eq.HJ_inequality}. To begin with, the inequality is transformed to eliminate the nonlinear differential term about disturbance input function~$k(x)$ while retaining the characteristics of the control policy. 

In order to facilitate the solution through conversion, add the following square term
\begin{equation}
    \nonumber
    - \gamma^2 \left\| w - \frac{1}{2\gamma^2}k^{\top}(x)\nabla V(x)  \right\|^2 \leq 0, 
\end{equation}
to both sides of the HJ inequality~\eqref{eq.HJ_inequality} and get the following converted inequality for all disturbance signal $w$
\begin{equation}
    \label{eq.converted_inequality}
    \begin{aligned}
        x^{\top}Qx &+ (\nabla V(x))^{\top}f(x) \\
        &- \frac{1}{4}(\nabla V(x))^{\top} g(x)R^{- 1}g^{\top}(x) \nabla V(x) \\
        &+ (\nabla V(x))^{\top} k(x) w - \gamma^2 w^{\top} w \leq 0. 
    \end{aligned}
\end{equation}

This transformed inequality~\eqref{eq.converted_inequality} is exactly the problem that we aim to solve in this work. It can be proved that the controller derived from the feasible solution of the converted inequality~\eqref{eq.converted_inequality} reserves the stability and disturbance attenuation performance. With Lemma~\ref{lemma.converse} in place, it follows that the inequality~\eqref{eq.converted_inequality} admits a feasible solution $V_a(x)$. 

\begin{theorem}[Stability and Robustness] 
\label{thm.stability_and_attenuation_perf}
Suppose that the converted inequality~\eqref{eq.converted_inequality} admits a feasible solution $V(x)$. Then, the closed-loop system with the controller 
\begin{equation}
    \nonumber
    u(x) = - \frac{1}{2}R^{- 1}g^{\top}(x)\nabla V(x) ,
\end{equation}
is asymptotically stable at $w \equiv 0$, and has an $L_2$-gain $\leq$ $\gamma$ for all $w \in L_{2} [0,\infty)$. 
\end{theorem}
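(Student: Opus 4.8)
The plan is to treat the feasible solution $V(x)$ of~\eqref{eq.converted_inequality} as a storage/Lyapunov function and to derive a dissipation inequality of the form $\dot V \le -\|z\|^2 + \gamma^2\|w\|^2$ along the closed-loop trajectories, from which both claims follow. Throughout I use that a feasible solution is understood to be smooth, positive semi-definite, with $V(0)=0$ (inherited from the HJ inequality~\eqref{eq.HJ_inequality} from which~\eqref{eq.converted_inequality} was obtained), and that~\eqref{eq.converted_inequality} holds pointwise in $x$ for \emph{every} value of $w$.

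First I would differentiate $V$ along~\eqref{eq.mixed_platoon} with $u=u(x)=-\tfrac12 R^{-1}g^\top(x)\nabla V(x)$:
\[
\dot V = (\nabla V(x))^\top\bigl(f(x)+g(x)u(x)+k(x)w\bigr) = (\nabla V(x))^\top f(x) - \tfrac12(\nabla V(x))^\top g(x)R^{-1}g^\top(x)\nabla V(x) + (\nabla V(x))^\top k(x)w .
\]
Then I would split the quadratic control term as $\tfrac12(\cdot)=\tfrac14(\cdot)+\tfrac14(\cdot)$ and recognize $\tfrac14(\nabla V(x))^\top g(x)R^{-1}g^\top(x)\nabla V(x)=u^\top(x)Ru(x)$, so that $\dot V$ equals the left-hand side of~\eqref{eq.converted_inequality}, evaluated at the actual disturbance $w$, plus the leftover term $-x^\top Qx+\gamma^2 w^\top w-u^\top(x)Ru(x)$. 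Since the left-hand side of~\eqref{eq.converted_inequality} is $\le 0$ at this $w$, I obtain the dissipation inequality $\dot V \le -x^\top Qx - u^\top(x)Ru(x) + \gamma^2 w^\top w = -\|z\|^2 + \gamma^2\|w\|^2$.

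The $L_2$-gain bound then follows by integrating on $[0,T]$: using $x(0)=0$, $V(0)=0$ and $V(x(T))\ge 0$ gives $\int_0^T\|z\|^2\,dt \le \gamma^2\int_0^T\|w\|^2\,dt$, and letting $T\to\infty$ yields an $L_2$-gain $\le\gamma$ for all $w\in L_2[0,\infty)$. For internal stability I set $w\equiv 0$ in the dissipation inequality, obtaining $\dot V \le -\|z\|^2 \le -x^\top Qx \le 0$; since $Q\succ 0$ this decrease is strict whenever $x\neq 0$, so $V$ serves as a Lyapunov certificate and the closed loop is (Lyapunov) stable with trajectories confined to sublevel sets of $V$.

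The one point that requires care is upgrading this to \emph{asymptotic} stability, because $V$ is only guaranteed positive semi-definite, so $\dot V \le -x^\top Qx$ does not immediately force $x(t)\to 0$. Here I would invoke the zero-state observability of~\eqref{eq.mixed_platoon}-\eqref{eq.performance_output} together with a LaSalle-type invariance argument: along the bounded trajectory $\dot V\to 0$, which forces $z\equiv 0$ on the limit set, hence $x\equiv 0$ by $Q\succ 0$, so the only invariant subset of $\{\dot V=0\}$ is the origin and $x(t)\to 0$. This observability/invariance bookkeeping is the only non-mechanical part of the argument; the algebraic reduction to the dissipation inequality — once the $\tfrac14+\tfrac14$ splitting and the identification $\tfrac14(\nabla V)^\top gR^{-1}g^\top\nabla V=u^\top Ru$ are spotted — is routine.
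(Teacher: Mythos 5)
Your proposal is correct and follows essentially the same route as the paper: substitute $u(x)=-\tfrac12 R^{-1}g^{\top}(x)\nabla V(x)$ into~\eqref{eq.converted_inequality} to obtain the dissipation inequality $\dot V \le -x^{\top}Qx - u^{\top}Ru + \gamma^{2}w^{\top}w$, then conclude stability at $w\equiv 0$ via $V$ as a Lyapunov function and the $L_2$-gain bound by integration. The only difference is one of explicitness: the paper delegates the gain bound and the upgrade to asymptotic stability to~\cite[Theorem 16]{van1992} and zero-state observability, whereas you spell out the integration with $V(0)=0$, $V\ge 0$ and the LaSalle/observability argument yourself, which is a faithful (and slightly more self-contained) rendering of the same proof.
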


\begin{proof}
Substituting the expression of $u(x)$ into the transformed inequality~\eqref{eq.converted_inequality} yields 
\begin{equation}
    \label{eq.inequality_Hamilton}
    \begin{aligned}
        &(\nabla V(x))^{\top} \big( f(x) + g(x)u(x) + k(x)w \big) \\
        \leq & -x^{\top}Qx - u^{\top}(x)Ru(x) + \gamma^{2}w^{\top}w .
    \end{aligned}
\end{equation}
When $w \equiv 0$, one has 
\begin{equation}
    \nonumber
    (\nabla V(x))^{\top} \big( f(x) + g(x)u(x) \big) \leq -x^{\top}Qx - u^{\top}(x)Ru(x) \leq 0. 
\end{equation}
Therefore, the asymptotic stability is obtained by Lyapunov's direct method, where $V(x)$ is a Lyapunov function candidate. For all $w \in L_{2} [0,\infty)$, by integrating the derived inequality~\eqref{eq.inequality_Hamilton}, it can be directly obtained by~\cite[Theorem 16]{van1992} that the closed-loop system has an $L_2$-gain $\leq$ $\gamma$. 
\end{proof}

\subsection{Model-based Learning Algorithm}

With stability and robustness results shown in Theorem~\ref{thm.stability_and_attenuation_perf}, we are ready to design a model-based learning algorithm to solve the converted inequality~\eqref{eq.converted_inequality}. Precisely, given a desired attenuation level $\gamma$, the inner-loop iteration of the algorithm employs the policy iteration method to derive stabilizing controllers. When the iterative process converges, the converted inequality~\eqref{eq.converted_inequality} can be restored by substituting the improved control policy~\eqref{eq.policy_improvement} into the Hamiltonian constraint~\eqref{eq.Hamiltonian_constraint}. In outer-loop iteration, the attenuation level is optimized by using the gap of the Hamiltonian constraint. 
The pseudocode of the developed method is shown in Algorithm~\ref{alg}. In the following, we present further elaborations and theoretical guarantees on the developed algorithm. 

\begin{algorithm}[!tb]
    \caption{Model-based Learning Algorithm}
    \label{alg}
    \LinesNumbered 
    \KwIn{initial control policy $u^{(0)}(x)$.}
    \For{$i = 1, 2, \cdots$}{
        \textbf{Attenuation Level Optimization:}
        \begin{subequations}
            \label{eq.gamma_optimization}
            \begin{align}
                \gamma^{(i)} = \text{arg} \min_{\gamma>0}{ \gamma } \\
                \mathrm{s.t.} ~~ \mathcal{L} \left( V(x), u^{(i-1)}(x), \gamma \right) &\geq 0 \\
                V(x) &\geq 0 .
            \end{align}
        \end{subequations}
        
        Let $V_0^{(i)}(x) \gets V(x)$ and $u_0^{(i)}(x) \gets u^{(i-1)}(x)$. 
        
        \For{$k = 1, 2, \cdots$}{
            \textbf{Policy Evaluation:}
            \begin{subequations}
                \label{eq.policy_evaluation}
                \begin{align}
                    \label{eq.optimization_objective}
                    V_k^{(i)}(x) = \text{arg} \min_{V}{ \int_{\Omega}{V(x) dx} } \\
                    \label{eq.Hamiltonian_constraint}
                    \mathrm{s.t.} ~~ \mathcal{L} \left( V(x), u_{k-1}^{(i)}(x), \gamma^{(i)} \right) &\geq 0 \\
                    \label{eq.non-increasing_value}
                    V_{k-1}^{(i)}(x) - V(x) &\geq 0 \\
                    \label{eq.semi-positive_definite}
                    V(x) &\geq 0 .
                \end{align}
            \end{subequations}
            
            \textbf{Policy Improvement:}
            \begin{equation}
                \label{eq.policy_improvement}
                u_{k}^{(i)}(x) = - \frac{1}{2}R^{- 1}g^{\top}(x)\nabla V_k^{(i)}(x) .
            \end{equation}
        }
        Let $u^{(i)}(x) \gets u_{\infty}^{(i)}(x)$.
    }
\end{algorithm}

\vspace{1mm}
\noindent \textbf{(Inner-loop) Policy Iteration:}
For an attenuation level $\gamma^{(i)} \geq \gamma^*$, a stabilizing controller is designed in the inner-loop iteration to allow the closed-loop system to have an $L_2$-gain smaller than $\gamma^{(i)}$. 
Enlightened by the existing PI framework~\cite{li2022reinforcement}, the step of policy improvement~\eqref{eq.policy_improvement} allows the nonlinear differential term about control input function~$g(x)$ to be simplified to a linear term~\eqref{eq.Hamiltonian_constraint} in the step of policy evaluation~\eqref{eq.optimization_objective}. Consider the negative Hamiltonian as 
\begin{equation}
    \label{eq.inverse_Hamiltonian}
    \begin{aligned}
        \mathcal{L} \left( V(x), u(x), \gamma \right) &\triangleq - (\nabla V(x))^{\top} \big( f(x)\!+\!g(x)u(x)\!+\!k(x)w \big) \\
        &~~~-x^{\top}Qx - u^{\top}(x)Ru(x) + \gamma^{2}w^{\top}w .
    \end{aligned}
\end{equation}

Given an improved controller $u_{k-1}^{(i)}(x)$ at the beginning of the $k$-th iteration, the value function $V_k^{(i)}(x)$ is updated by imposing a constraint on Hamiltonian~\eqref{eq.Hamiltonian_constraint} in policy evaluation, which only contains the linear term of the differential of value function. The following lemma illustrates the existence of the initial feasible solution of the PI framework. 

        

\begin{lemma}[Feasibility of Hamiltonian Constraint]
\label{lemma.existence_of_value}
There exists a controller $u_a(x)$ such that the Hamiltonian constraint $\mathcal{L} \left( V(x), u_a(x), \gamma^{(i)} \right) \geq 0$ has a non-empty feasible set about the value function $V(x)$.
\end{lemma}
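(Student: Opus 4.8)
The plan is to exhibit an explicit candidate pair $(V_a(x), u_a(x))$ for which the negative Hamiltonian $\mathcal{L}(V_a(x), u_a(x), \gamma^{(i)})$ is nonnegative, thereby certifying that the feasible set of $V(x)$ in the constraint $\mathcal{L}(V(x), u_a(x), \gamma^{(i)}) \geq 0$ is non-empty. The natural candidate is furnished by Lemma 2 (the converse result, \cite[Theorem 18]{van1992}): since $\gamma^{(i)} \geq \gamma^*$, Assumption~\ref{assump.existence_of_robust_controller} guarantees a robust controller $u = \pi(x)$, and hence there exists a smooth positive semi-definite solution $V_a(x)$ to the HJ inequality~\eqref{eq.HJ_inequality} with $V_a(0)=0$. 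This $V_a(x)$ is my proposed witness for the value function.

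The key steps are as follows. First I would recall that, by the derivation in Section III.B, adding the square term $-\gamma^2\|w - \frac{1}{2\gamma^2}k^{\top}(x)\nabla V_a(x)\|^2 \leq 0$ to both sides of~\eqref{eq.HJ_inequality} shows that $V_a(x)$ also satisfies the converted inequality~\eqref{eq.converted_inequality} for every disturbance signal $w$. Second, I would set $u_a(x) = -\frac{1}{2}R^{-1}g^{\top}(x)\nabla V_a(x)$, i.e.\ the controller associated with $V_a$ through the policy-improvement rule. Substituting this $u_a(x)$ into~\eqref{eq.converted_inequality} — exactly the algebraic completion-of-squares step carried out in the proof of Theorem~\ref{thm.stability_and_attenuation_perf}, but read in reverse — yields
\begin{equation}
    \nonumber
    (\nabla V_a(x))^{\top}\big(f(x)+g(x)u_a(x)+k(x)w\big) \leq -x^{\top}Qx - u_a^{\top}(x)Ru_a(x) + \gamma^2 w^{\top}w,
\end{equation}
which, upon rearrangement and comparison with the definition~\eqref{eq.inverse_Hamiltonian} of $\mathcal{L}$, is precisely $\mathcal{L}(V_a(x), u_a(x), \gamma^{(i)}) \geq 0$. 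Third, since $V_a(x)$ is positive semi-definite, it also satisfies the side constraint $V(x)\geq 0$, so $V_a(x)$ lies in the feasible set, which is therefore non-empty. This establishes the claim with $u_a$ as in the statement.

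The main obstacle — more a matter of care than of depth — is ensuring the completion-of-squares identity is applied consistently: one must verify that the term $(\nabla V_a(x))^{\top}g(x)u_a(x)$ arising from the substitution combines with $-u_a^{\top}(x)Ru_a(x)$ to reproduce exactly the $-\frac{1}{4}(\nabla V_a(x))^{\top}g(x)R^{-1}g^{\top}(x)\nabla V_a(x)$ term in~\eqref{eq.converted_inequality}, and that the remaining $w$-dependent terms match on both sides without residue. A secondary point worth flagging is that~\eqref{eq.converted_inequality} holds for \emph{all} $w$, so the resulting inequality $\mathcal{L}(V_a, u_a, \gamma^{(i)})\geq 0$ holds identically in $w$ — consistent with the way $w$ appears as a free symbol in the constraint of the algorithm. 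No convergence or regularity issue arises here, since smoothness and semi-definiteness of $V_a$ are handed to us directly by Lemma~\ref{lemma.converse}.
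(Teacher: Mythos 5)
Your proposal is correct and follows essentially the same route as the paper: invoke Lemma~\ref{lemma.converse} to obtain $V_a(x)$ satisfying the HJ inequality~\eqref{eq.HJ_inequality}, construct $u_a(x) = -\frac{1}{2}R^{-1}g^{\top}(x)\nabla V_a(x)$, and verify $\mathcal{L}(V_a(x),u_a(x),\gamma^{(i)})\geq 0$. The only difference is that you spell out the completion-of-squares verification that the paper labels ``straightforward,'' which is a faithful elaboration rather than a different argument.
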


\begin{proof}
According to Lemma~\ref{lemma.converse}, the value function $V_a(x)$ satisfies the HJ inequality~\eqref{eq.HJ_inequality}. Construct the controller as $u_a(x) = - \frac{1}{2}R^{- 1}g^{\top}(x)\nabla V_a(x)$. It is straightforward that $\mathcal{L} \left( V_a(x), u_a(x), \gamma^{(i)} \right) \geq 0$. So, $V_a(x)$ is a feasible solution to the Hamiltonian constraint $\mathcal{L} \left( V(x), u_a(x), \gamma^{(i)} \right) \geq 0$. 
\end{proof}

Therefore, the value function and control policy can be initialized as $V_0^{(i)}(x) = V_a(x)$ and $u_0^{(i)}(x) = u_a(x)$ such that the first iteration of policy evaluation has a feasible solution. Besides, it can be proved recursively that the subsequent iterations of policy evaluation~\eqref{eq.policy_evaluation} have a feasible solution. 

\begin{theorem}[Recursive Feasibility]
\label{thm.feasibility}
Consider the PI paradigm with the policy evaluation~\eqref{eq.policy_evaluation} and the policy improvement~\eqref{eq.policy_improvement}. If policy evaluation is feasible at the $k$-th iteration, it will also be feasible at the $(k+1)$-th iteration. 
\end{theorem}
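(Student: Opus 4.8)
The plan is to prove recursive feasibility by exhibiting an explicit feasible point for the $(k+1)$-th policy evaluation program~\eqref{eq.policy_evaluation}, namely the optimizer $V_k^{(i)}(x)$ produced at the $k$-th iteration. Since $V_k^{(i)}$ solves~\eqref{eq.policy_evaluation} with controller $u_{k-1}^{(i)}$, it already satisfies $V_k^{(i)}(x)\geq 0$ and $\mathcal{L}(V_k^{(i)}(x),u_{k-1}^{(i)}(x),\gamma^{(i)})\geq 0$, and it lies in the same function class over which the program at step $k+1$ optimizes, so exhibiting it as a candidate is legitimate. Two of the three constraints at step $k+1$ are then immediate: the monotonicity constraint~\eqref{eq.non-increasing_value} reads $V_k^{(i)}(x)-V_k^{(i)}(x)=0\geq 0$, and the positive semi-definiteness constraint~\eqref{eq.semi-positive_definite} is inherited from step $k$.

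The substance of the proof is to verify the Hamiltonian constraint~\eqref{eq.Hamiltonian_constraint} at step $k+1$, i.e.\ $\mathcal{L}(V_k^{(i)}(x),u_k^{(i)}(x),\gamma^{(i)})\geq 0$, where $u_k^{(i)}$ is the improved controller from~\eqref{eq.policy_improvement}. I would compute the difference $\mathcal{L}(V_k^{(i)},u_k^{(i)},\gamma^{(i)})-\mathcal{L}(V_k^{(i)},u_{k-1}^{(i)},\gamma^{(i)})$ directly from the definition~\eqref{eq.inverse_Hamiltonian}: the terms involving $f(x)$, $k(x)$ and $w$ cancel, and only the control-dependent terms $-(\nabla V_k^{(i)})^{\top}g(x)(u_k^{(i)}-u_{k-1}^{(i)})-u_k^{(i)\top}Ru_k^{(i)}+u_{k-1}^{(i)\top}Ru_{k-1}^{(i)}$ remain. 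Substituting the policy-improvement identity $g^{\top}(x)\nabla V_k^{(i)}(x)=-2R\,u_k^{(i)}(x)$ (and using $R^{\top}=R$) turns this remainder into a perfect square, giving
\begin{equation}
    \nonumber
    \mathcal{L}(V_k^{(i)},u_k^{(i)},\gamma^{(i)}) = \mathcal{L}(V_k^{(i)},u_{k-1}^{(i)},\gamma^{(i)}) + (u_k^{(i)}-u_{k-1}^{(i)})^{\top} R\,(u_k^{(i)}-u_{k-1}^{(i)}).
\end{equation}
Since $R\succ 0$ the last term is nonnegative, and the first term is nonnegative by feasibility at step $k$; hence the Hamiltonian constraint holds, so $V_k^{(i)}$ is feasible for the $(k+1)$-th program. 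Combined with Lemma~\ref{lemma.existence_of_value}, which furnishes the feasible initialization $V_0^{(i)}=V_a$, this closes the induction.

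The step I expect to require the most care is the completing-the-square bookkeeping: one must track every occurrence of $g(x)$ correctly through the transpose and the factor $\tfrac12 R^{-1}$ in~\eqref{eq.policy_improvement}, and confirm that all terms depending on $f(x)$, $k(x)$ and the disturbance argument $w$ drop out of the difference, so that the resulting identity — and therefore recursive feasibility — is independent of how $w$ is quantified in the Hamiltonian~\eqref{eq.inverse_Hamiltonian}. Everything else is routine, and no regularity issue arises because $V_k^{(i)}$ is already admissible for the optimization at the next iteration.
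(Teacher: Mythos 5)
Your proposal is correct and follows essentially the same route as the paper: exhibit $V_k^{(i)}$ itself as a feasible candidate at step $k+1$ and show, by completing the square via the policy-improvement identity $g^{\top}(x)\nabla V_k^{(i)}(x)=-2R\,u_k^{(i)}(x)$, that $\mathcal{L}(V_k^{(i)},u_k^{(i)},\gamma^{(i)})=\mathcal{L}(V_k^{(i)},u_{k-1}^{(i)},\gamma^{(i)})+(u_k^{(i)}-u_{k-1}^{(i)})^{\top}R(u_k^{(i)}-u_{k-1}^{(i)})\geq 0$, which is exactly the paper's key identity~\eqref{eq.induction}. Your explicit check of the monotonicity constraint~\eqref{eq.non-increasing_value} and the semi-definiteness constraint~\eqref{eq.semi-positive_definite} is a slightly more careful account of what the paper summarizes in its final sentence.
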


\begin{proof}
Assume that for $u_{k-1}^{(i)}(x)$, the Hamiltonian constraint~\eqref{eq.Hamiltonian_constraint} in policy evaluation has a feasible solution $V_k^{(i)}(x)$, \emph{i.e.}, $\mathcal{L} \left( V_k^{(i)}(x), u_{k-1}^{(i)}(x), \gamma^{(i)} \right) \geq 0$. After updating the control policy $u_{k}^{(i)}(x)$ at the policy improvement step~\eqref{eq.policy_improvement}, we have 
\begin{equation}
    \label{eq.induction}
    \begin{aligned}
        &~\mathcal{L} \left( V_k^{(i)}(x), u_{k}^{(i)}(x), \gamma^{(i)} \right) \\
        = &~\mathcal{L} \left( V_k^{(i)}(x), u_{k-1}^{(i)}(x), \gamma^{(i)} \right) \\
        &+ \left(u_{k}^{(i)}(x) - u_{k-1}^{(i)}(x)\right)^{\top} R \left(u_{k}^{(i)}(x) - u_{k-1}^{(i)}(x)\right) \geq 0.
    \end{aligned}
\end{equation}
Therefore, $V_{k+1}^{(i)}(x) = V_k^{(i)}(x)$ is at least a feasible solution of the policy evaluation at the $(k+1)$-th iteration. 
\end{proof}


\begin{corollary}[Recursive Stability and Robustness]
\label{corollary.recursive_stability}
If the initial control policy makes the Hamiltonian constraint~\eqref{eq.Hamiltonian_constraint} feasible, the closed-loop system with the control policy~\eqref{eq.policy_improvement} at every iteration step is asymptotically stable and has an $L_2$-gain $\leq$ $\gamma^{(i)}$. 
\end{corollary}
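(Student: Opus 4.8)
The plan is to obtain the corollary by chaining Theorem~\ref{thm.feasibility} with Theorem~\ref{thm.stability_and_attenuation_perf}; the only substantive step is recognizing that, once the self-consistent controller~\eqref{eq.policy_improvement} is substituted back in, the Hamiltonian constraint~\eqref{eq.Hamiltonian_constraint} is nothing but the converted inequality~\eqref{eq.converted_inequality}. Throughout I treat $w$ as a free variable, so that ``$\mathcal{L}\geq 0$'' is understood to hold for all $w$, matching the for-all-$w$ quantifier in~\eqref{eq.converted_inequality}.

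First I would propagate feasibility. By hypothesis the initial policy $u_0^{(i)}(x)$ makes~\eqref{eq.Hamiltonian_constraint} feasible; combined with the constraints~\eqref{eq.non-increasing_value} and~\eqref{eq.semi-positive_definite}, taking $V=V_0^{(i)}$ shows the policy-evaluation program~\eqref{eq.policy_evaluation} is feasible at $k=1$, and then Theorem~\ref{thm.feasibility} gives feasibility at every inner-loop iteration $k\geq 1$. Fix such a $k$: feasibility yields $V_k^{(i)}(x)\geq 0$ with $\mathcal{L}\big(V_k^{(i)}(x),u_{k-1}^{(i)}(x),\gamma^{(i)}\big)\geq 0$.

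Next I would upgrade the controller from $u_{k-1}^{(i)}$ to the improved one $u_k^{(i)}$ using the identity already established inside the proof of Theorem~\ref{thm.feasibility},
\begin{equation}
    \nonumber
    \mathcal{L}\big(V_k^{(i)},u_{k}^{(i)},\gamma^{(i)}\big) = \mathcal{L}\big(V_k^{(i)},u_{k-1}^{(i)},\gamma^{(i)}\big) + \big(u_{k}^{(i)}-u_{k-1}^{(i)}\big)^{\top} R \big(u_{k}^{(i)}-u_{k-1}^{(i)}\big) \geq 0 ,
\end{equation}
so that $V_k^{(i)}$ satisfies the Hamiltonian constraint with its own induced controller $u_k^{(i)}(x)=-\tfrac{1}{2}R^{-1}g^{\top}(x)\nabla V_k^{(i)}(x)$. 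Substituting this $u_k^{(i)}$ into $\mathcal{L}$ and completing the square on the $g$-term (i.e.\ using $(\nabla V_k^{(i)})^{\top}g\,u_k^{(i)}+(u_k^{(i)})^{\top}R\,u_k^{(i)} = -\tfrac{1}{4}(\nabla V_k^{(i)})^{\top}gR^{-1}g^{\top}\nabla V_k^{(i)}$) turns $\mathcal{L}(V_k^{(i)},u_k^{(i)},\gamma^{(i)})\geq 0$ into precisely the converted inequality~\eqref{eq.converted_inequality} written for $V_k^{(i)}$. Since in addition $V_k^{(i)}(x)\geq 0$, Theorem~\ref{thm.stability_and_attenuation_perf} applies verbatim and gives asymptotic stability at $w\equiv 0$ together with $L_2$-gain $\leq\gamma^{(i)}$ for the closed loop under $u_k^{(i)}$. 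As $k\geq 1$ was arbitrary, this is the claim.

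The main obstacle here is not a deep one: it is the bookkeeping that keeps $w$ as an indeterminate all the way through (so the Hamiltonian constraint and the converted inequality carry the same quantifier), together with checking that the remaining hypotheses of Theorem~\ref{thm.stability_and_attenuation_perf} beyond ``$V\geq 0$''—in particular the normalization $V_k^{(i)}(0)=0$ underlying the Lyapunov argument—are inherited along the chain from the initialization $V_0^{(i)}$, the monotonicity constraint~\eqref{eq.non-increasing_value}, and the zero-state-observability setup. Once that is settled, the statement indeed follows as a corollary of the two theorems.
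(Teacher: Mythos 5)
Your proposal is correct and follows essentially the same route as the paper: the paper's own (very terse) proof likewise observes that substituting the improved policy~\eqref{eq.policy_improvement} into the Hamiltonian constraint~\eqref{eq.Hamiltonian_constraint} recovers the converted inequality~\eqref{eq.converted_inequality}, so the corollary follows from Theorem~\ref{thm.stability_and_attenuation_perf}, with Theorem~\ref{thm.feasibility} supplying feasibility at every iteration. You have merely made explicit the completing-the-square step and the propagation of feasibility that the paper leaves implicit.
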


Note that, at every inner-loop iteration, substituting the improved control policy~\eqref{eq.policy_improvement} into the Hamiltonian constraint~\eqref{eq.Hamiltonian_constraint} yields the converted inequality~\eqref{eq.converted_inequality}. Thus, Corollary~\ref{corollary.recursive_stability} can be directly obtained from Theorem~\ref{thm.stability_and_attenuation_perf}. 
The aforementioned analysis establishes the stability and disturbance attenuation performance of the controller during the implementation of the PI process in the inner loop of Algorithm~\ref{alg}. 
Because the controller generated by each iteration of the inner loop has desired performance, the setting of termination condition has become an open problem. 


In order to apply convergence guidance to the value function, the two inequalities~\eqref{eq.non-increasing_value} and~\eqref{eq.semi-positive_definite} are imposed during the policy evaluation step to ensure that the value function is monotonically non-increasing and semi-positive definite, respectively. To formulate an optimization problem in policy evaluation, the integral of value function in interested state space $\Omega \subseteq \mathbb{R}^{2n}$ is selected as the optimization objective~\eqref{eq.optimization_objective}, where $\Omega$ is a compact set~\cite{zhu2017policy}. 

\vspace{1mm}
\noindent \textbf{(Outer-loop) Attenuation Level Optimization:}
After policy improvement step, the gap hidden in the Hamiltonian constraint~\eqref{eq.induction} implies that it is possible to find a smaller attenuation level $\gamma^{(i)} \leq \gamma^{(i-1)}$ to ensure that 
\begin{equation}
    \nonumber
    \mathcal{L} \left( V_{k}^{(i-1)}(x), u_{k}^{(i-1)}(x), \gamma^{(i)} \right) \geq 0. 
\end{equation}
Therefore, the obtained controller $u_{k}^{(i-1)}(x)$ has the potential to achieve a smaller $L_2$-gain for the closed-loop system. A smaller attenuation level can be found by solving the sum of squares program~\eqref{eq.gamma_optimization}, which has a non-empty feasible set that contains at least $\gamma^{(i-1)}$. By wrapping the optimization of attenuation level in the outer loop of the PI framework, a numerical method for approximating the solution to the $H_{\infty}$ optimal control problem is obtained. 



\begin{remark}
In the implementation of the algorithm, the value function is parameterized as
\begin{equation}
    \label{eq.parameterized_value}
    V(x) = \sum_{i=1}^{m} {c_i\varphi_i(x)} ,
\end{equation}
where $\{\varphi_i(x)\}_{i=1}^m$ is a set of basis functions, such as polynomials, and $\{c_i\}_{i=1}^m$ are the parameters to be optimized. The attenuation level optimization~\eqref{eq.gamma_optimization} and the policy evaluation step~\eqref{eq.policy_evaluation} are constructed as sum of squares programs~\cite{zhu2017policy}, which can be conveniently solved via SOSTOOLS.
\end{remark}

\section{Traffic Simulations}
\label{sec.simulation}

In this section, we present the nonlinear traffic simulations and analyze the performance of the developed model-based learning control policy. The nonlinear OVM model with a typical parameter setup~\cite{jin2017optimal} is employed for the HDVs.

For the parameter setup in the controller, the weight coefficients are set as $\theta_s=0.03$, $\theta_v=0.15$, $\theta_u=1$.
To conveniently solve sum of squares programs via SOSTOOLS toolbox, the piecewise function~\eqref{eq.piecewise} is approximated by a quintic polynomial. 
In Algorithm~\ref{alg}, a quartic polynomial consisting of $144$ terms is employed to approximate the value function by~\eqref{eq.parameterized_value}.
The set $\Omega$ is in the range of $|s_i| \leq 4$ and $|v_i| \leq 5$. The initial controller is chosen as $u^{(0)} = 0.5 s_1 - 1.0 v_1$. 
In the simulations, a sinusoidal disturbance signal $w(t) = 5 \text{sin} (20t / \pi )\;\mathrm{m/s}$ is imposed on the head vehicle. 
We first consider a small-scale mixed traffic system with three following vehicles, \emph{i.e.}, $n=3$.
As shown in Fig.~\ref{subfig:velocity_all_hdv}, when all the vehicles are HDVs, the velocity oscillations persist during its propagation. For comparison, the velocity perturbations are apparently mitigated by the proposed controller even after one single iteration (see Fig.~\ref{subfig:velocity_initial_controller}). This result validates the effectiveness of the inner-loop PI paradigm in Algorithm~\ref{alg}: the proposed controller can stabilize the mixed traffic system at each iteration step with attenuation performance guarantees. Further, with more iterations conducted, the attenuation level can be gradually and continuously improved (see Fig.~\ref{subfig:velocity_final_controller} for the performance after $20$ iterations and Fig.~\ref{subfig:attenuation_comparison} for the attenuation level during the simulations after different iterations).  These results validate the performance of the outer-loop attenuation level optimization in Algorithm~\ref{alg}.

In addition, we also consider a moderate-scale mixed traffic system with $n=15$, $m=5$ to demonstrate the control performance (see Fig.~\ref{fig:large_scale}). In this case, a quadratic polynomial function consisting of $900$ terms is employed to approximate the value function, and after $50$ outer-loop iterations, a centralized controller for the $5$ CAVs is obtained. It can be clearly observed in Fig.~\ref{fig:large_scale} that our method enables the CAVs to cooperatively dampen traffic perturbations and smooth traffic flow. 


\begin{figure}[!tb]
    \centering
    
    \subfigure[all HDVs]{
        \label{subfig:velocity_all_hdv}
        \begin{minipage}[t]{0.46\linewidth}
        \includegraphics[width=0.99\textwidth]{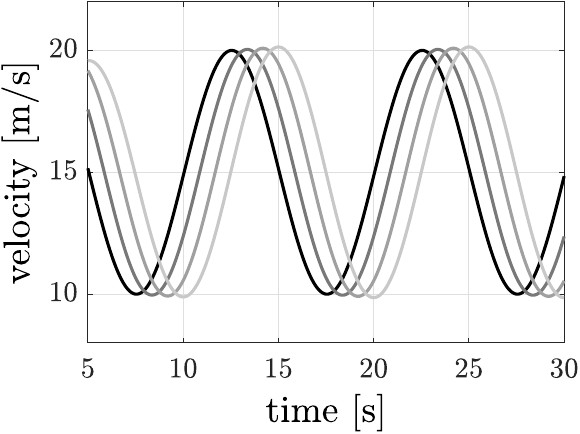}
        \end{minipage}
    }
    \subfigure[Controller after $1$ iteration]{
        \label{subfig:velocity_initial_controller}
        \begin{minipage}[t]{0.46\linewidth}
        \includegraphics[width=0.99\textwidth]{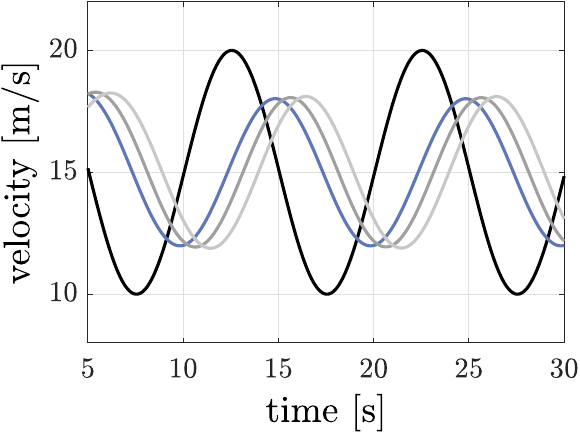}
        \end{minipage}
    }
    \qquad
    \subfigure[Controller after $20$ iterations]{
        \label{subfig:velocity_final_controller}
        \begin{minipage}[t]{0.46\linewidth}
        \includegraphics[width=0.99\textwidth]{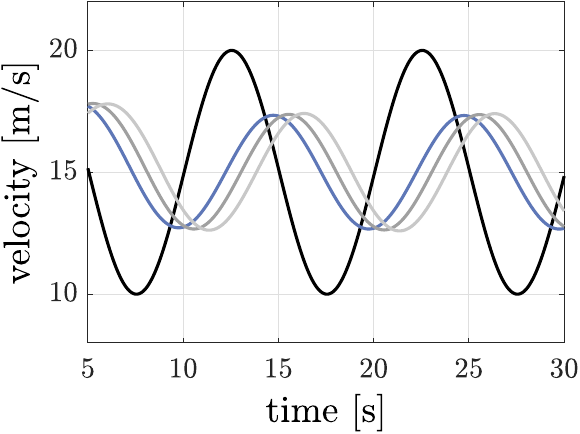}
        \end{minipage}
    }
    \subfigure[Attenuation comparison]{
        \label{subfig:attenuation_comparison}
        \begin{minipage}[t]{0.46\linewidth}
        \includegraphics[width=0.99\textwidth]{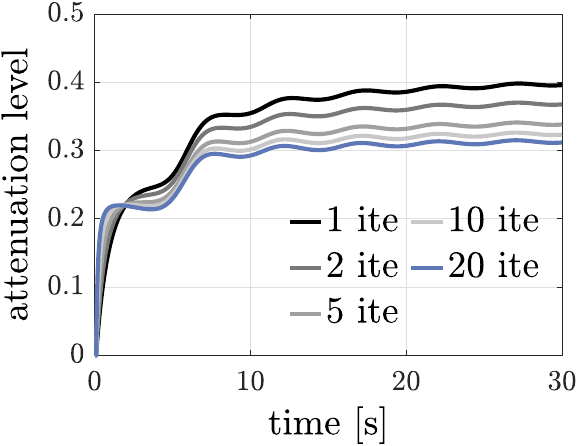}
        \end{minipage}
    }
    \vspace{-2mm}
    \caption{Small-scale simulation results with $n=3$, $m=1$. The black, gray and blue profiles represent the velocity of the head vehicle, the HDVs and the CAV, respectively.  (a) Simulation results when all the vehicles are HDVs. (b)(c) Simulation results under the learned control policies after $1$ or $20$ iterations, respectively. (d) The attenuation level $\gamma$ during the simulations under the controller after different iteration numbers. }
    \label{fig:velocity}
    \vspace{-1mm}
\end{figure}

\begin{figure}[!tb]
    \centering
    \includegraphics[width=0.4\textwidth,trim=0 0 0 30,clip]{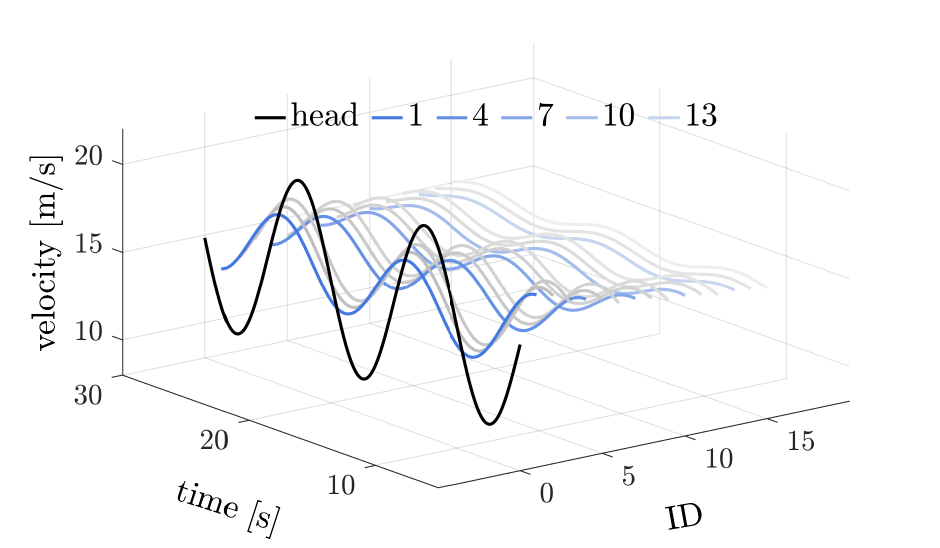}
    \vspace{-2mm}
    \caption{Moderate-scale simulation results with $n=15,m=5$. The meaning of different profiles is consistent with that in Fig.~\ref{fig:velocity}.}
    \label{fig:large_scale}
    \vspace{-4mm}
\end{figure}

\section{Conclusion}
\label{sec.conclusion}

This work investigates the optimal robust control problem of nonlinear mixed traffic systems. In order to reduce the influence of external disturbances from the head vehicle on the entire traffic flow, a zero-sum game is first formulated to optimize the worst-case performance. The converted Hamilton-Jacobi inequality is employed to derive robust controllers and reserve space for the optimization of disturbance attenuation performance. A model-based learning algorithm is then presented, combining inner-loop policy iterations and outer-loop attenuation level optimization.
Simulation studies verify the effectiveness of the obtained control policy for the CAVs to mitigate traffic waves. 
Considering possible traffic model mismatches, one future direction is to design similar policy iteration algorithms to address the corresponding robust performance problem. Another interesting topic is to extend the presented method to a model-free learning version, which does not require any priori knowledge of nonlinear mixed traffic dynamics.

\bibliographystyle{IEEEtran}
\bibliography{IEEEabrv,ref}

\end{document}